\tikzstyle{state}=[draw,circle,inner sep=0pt,minimum size=6mm,thick]
\tikzstyle{start}=[pin={[pin edge={black,stealth'-}]left:}]
\tikzstyle{start above}=[pin={[pin edge={black,<-}]above:}]
\tikzstyle{accepting}=[double distance=0.5pt]
\tikzstyle{transition}+=[inner sep=1pt]
\tikzstyle{htransition}=[transition,minimum width=1.5em]
\tikzstyle{vtransition}=[transition,minimum height=1.5em]
\tikzstyle{every initial by arrow}=[initial text=]
\newcommand{\PreserveBackslash}[1]{\let\temp=\\#1\let\\=\temp}
\begin{document}
\title{Discovering an Algorithm Actually Learning Restricted Single Occurrence Regular Expression with Interleaving}
%
\titlerunning{Abbreviated paper title}


\author{Xiaofan Wang \and Xiaolan Zhang}
\authorrunning{X.Wang et al.}
%
\institute{State Key Laboratory of Computer Science, Institute of Software, Chinese Academy of Sciences \\ University of Chinese Academy of Sciences \\
\email{\{wangxf,zhangxl\}@ios.ac.cn}}
%
\maketitle              
\begin{abstract}
A recent paper \cite{pro1,pro2} proposed an algorithm $i$SOIRE, which
combines \textit{single-occurrence automaton} (SOA) \cite{Geert2010,Frey2015} and \textit{maximum independent set} (MIS) to learn a
subclass \textit{single-occurrence regular expressions with interleaving} (SOIREs) and claims the learnt expression is SOIRE, which has unrestricted usage for interleaving.
However, in reality, the learnt expression still has many restrictions for using interleaving, even does for Kleene-star or interation, i.e,  the learnt expression is not an SOIRE,
we prove that by examples.  
In this paper, for the algorithm $i$SOIRE, we first give the basic notions,
then provide analyses about incorrectness, finally present the correct result learnt by $i$SOIRE.
Our theoretical analyses demonstrate that the result derived by $i$SOIRE  belongs to a subclass of SOIREs.
\end{abstract}

%
%
\section{Basic Notions}

We give the notions about single-occurrence regular expressions with interleaving (SOIRE), single-occurrence automaton (SOA).
\begin{definition}[regular expression with interleaving]
Let $\Sigma$ denote a finite set of alphabet symbols.
The regular expression with interleaving are defined as follows.
$\varepsilon$, $a\in\Sigma$ are regular expressions.
For the regular expressions $r_1$ and $r_2$, the concatenation $r_1\cdot r_2$, the kleene-star $r_1^*$, the disjunction $r_1|r_2$,
the interleaving $r_1\&r_2$ are also regular expressions. Note that iteration $r_1^{+}$, optional $r?$ are used as abbreviations of $r_1r_1^*$, $r|\varepsilon$.
Usually, we omit concatenation operators in examples.
$\mathcal{L}(r_1\&r_2)\!=\!\mathcal{L}(r_1)\&\mathcal{L}(r_2)=\bigcup_{s_1\in \mathcal{L}(r_1),s_2\in \mathcal{L}(r_2)} s_1\&s_2$.
For $u,v\in \Sigma^*$ and $a,b\in \Sigma$, $u\&\varepsilon=\varepsilon \&u=\{u\}$,
 and $(au)\&(bv)=\{a(u\&bv)\}\cup \{b(au\& v)\}$.
\end{definition}

\begin{definition}[single-occurrence regular expression with interleaving (SOIRE)]
Let $\Sigma$ be a finite alphabet. A single-occurrence regular expression with interleaving
 (SOIRE) is a regular expression with interleaving over $\Sigma$ in which every terminal
symbol occurs at most once.
\end{definition}

According to the definition, SOIREs have unrestricted usage for interleaving and other operators.
\begin{definition}[single-occurrence automaton (SOA) \cite{Geert2010,Frey2015}]
Let $\Sigma$ be a finite alphabet, and let $q_0$, $q_f$ be distinct symbols
that do not occur in $\Sigma$. A single-occurrence automaton (SOA) over $\Sigma$ is a finite
directed graph $\mathscr{A}\! =\! (V ,E)$ such that
(1) $\{q_0, q_f\}\in V$, and $V \subseteq \Sigma \cup \{q_0, q_f\}$.
(2) $q_0$ has only outgoing edges, $q_f$ has only incoming edges, and every $v\in V\setminus \{q_0,q_f\}$ is visited
during a walk from  $q_0$ to $q_f$.
\end{definition}

A string $a_1\cdots a_n$ ($n\geq 0$) is accepted by an SOA $\mathscr{A}$, if and only if there is a path $q_0\rightarrow a_1\rightarrow\cdots\rightarrow a_n\rightarrow q_f$ in $\mathscr{A}$.

\section{Analyses about the algorithm $i$SOIRE}
\label{analysis}

First, we present analyses about the algorithm $i$SOIRE.
Then, in term of the expression learnt by the algorithm $i$SOIRE, we obtain two conclusions and provide the corresponding proofs.
 The two conclusions reveal the expression learnt by the algorithm $i$SOIRE is not an SOIRE.

 For any given finite sample, SOA is constructed by using algorithm 2T-INF \cite{Geert2010},
 the algorithm $i$SOIRE \cite{pro1,pro2} combines SOA and MIS to infer an expression called SOIRE.
The main procedure $Soa2Soire$ \cite{pro1,pro2}  (see Figure \ref{Alg1}) is designed by revising only few steps of the algorithm $Soa2Sore$ \cite{Frey2015} (see Figure \ref{AlgSoa2Sore}), which is used to infer a single-occurrence regular expression (SORE) \cite{Geert2010,Frey2015}.
The main differences between algorithms $Soa2Soire$ and $Soa2Sore$ are in line 5 $\sim$ line 8.

In algorithm $Soa2Sore$, in line 5 $\sim$ line 8,
if the input SOA built form given finite sample contains a strongly connected component ($U$),
$U$ is used to infer an expression ($r$) and then is added an iteration operator ($^+$), i.e., $r^+$.
The strongly connected component ($U$) is contracted to a vertex labelled with $r^+$.
However, in algorithm $Soa2Soire$, 
only for the strongly connected component ($U$) where $|U|\!=\!1$,
$U$ is used to generate an expression ($a\!\in\! \Sigma,U\!=\!\{a\}$) and then is added an iteration operator ($^+$), i.e., $a^+$.
For $|U|\!>\!1$, $U$ is used to introduce interleaving $\&$ into inferred expression by calling subroutine $Merge$ (see Figure \ref{Alg2}).

Subroutine $Merge$ is used to return an expression, where the interleaving $\&$ is introduced.
$filter(U,S)$ \cite{pro1,pro2}  denotes that, for each string $s\in S$,
$filter$ extracts the substring consisting of symbols in $U$, where the order of the alphabetic symbols maintains the relative order of that in $s$.
The input of the subroutine $Merge$ is the set of the strings extracted by $filter$.
In $Merge$, first, in line 1 $\sim$ line 8, $all\_mis$ \cite{peng2015} (the set of the distinct maximum independent set) is computed.
Then, in line 9 $\sim$ line 13, for each obtained maximum independent set $mis\in all\_mis$, $mis$ is used to generate an expression by recursively calling $Soa2Soire$,
where the input is the set of the strings extracted by $filter(mis,S)$.
Each generated expression is input in $U'$.
Finally, in line 14, subroutine $combine$ connects all expressions in $U'$ by using interleaving $\&$.
For example $U'\!=\!\{r_1,r_2,r_3\}$, $combine(U')\!=\!r_1\&r_2\&r_3$.
The result returned by $Merge$ is the result of $combine$.

Since the algorithm $Soa2Soire$ presented in \cite{pro1,pro2} does not provide the proof about the correctness,
for any learnt expression, we check whether the learnt expression is SOIRE, which has unrestricted usage for interleaving and other operators.
However, we discover that the learnt expression still has many restrictions for using interleaving, even does for Kleene-Star or iteration,
the learnt result is not an SOIRE.
For better understanding, we provide the proofs by examples, which specify the learnt result is not an SOIRE.
\begin{figure}[H]
  \centering
  \includegraphics[width=0.75\textwidth]{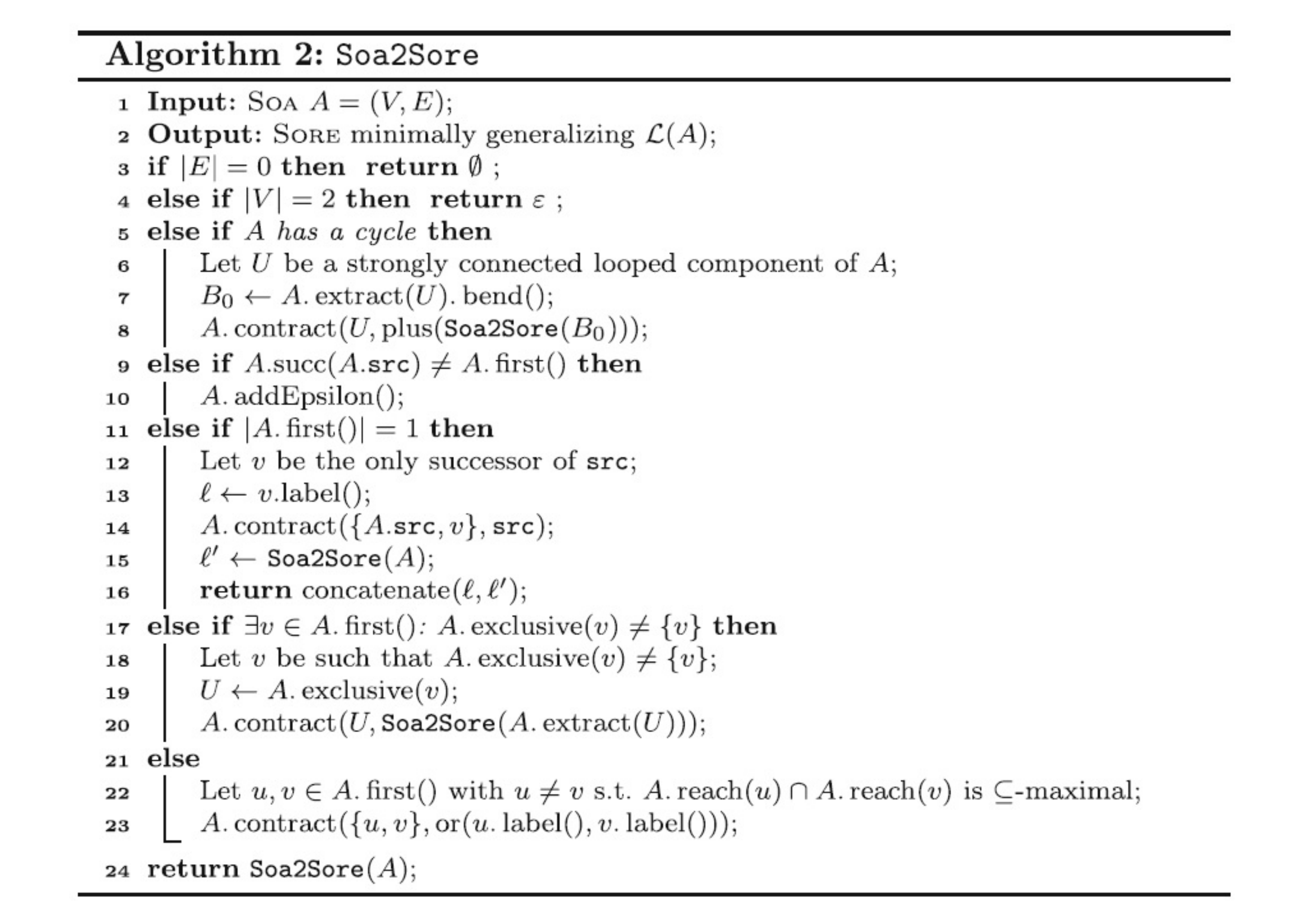}
  \caption{The algorithm $Soa2Sore$.}
  \label{AlgSoa2Sore}
\end{figure}
\vspace{-1.5cm}
\begin{figure}[H]
  \centering
  \subfigure[$Soa2Soire$]{
  \includegraphics[width=0.45\textwidth]{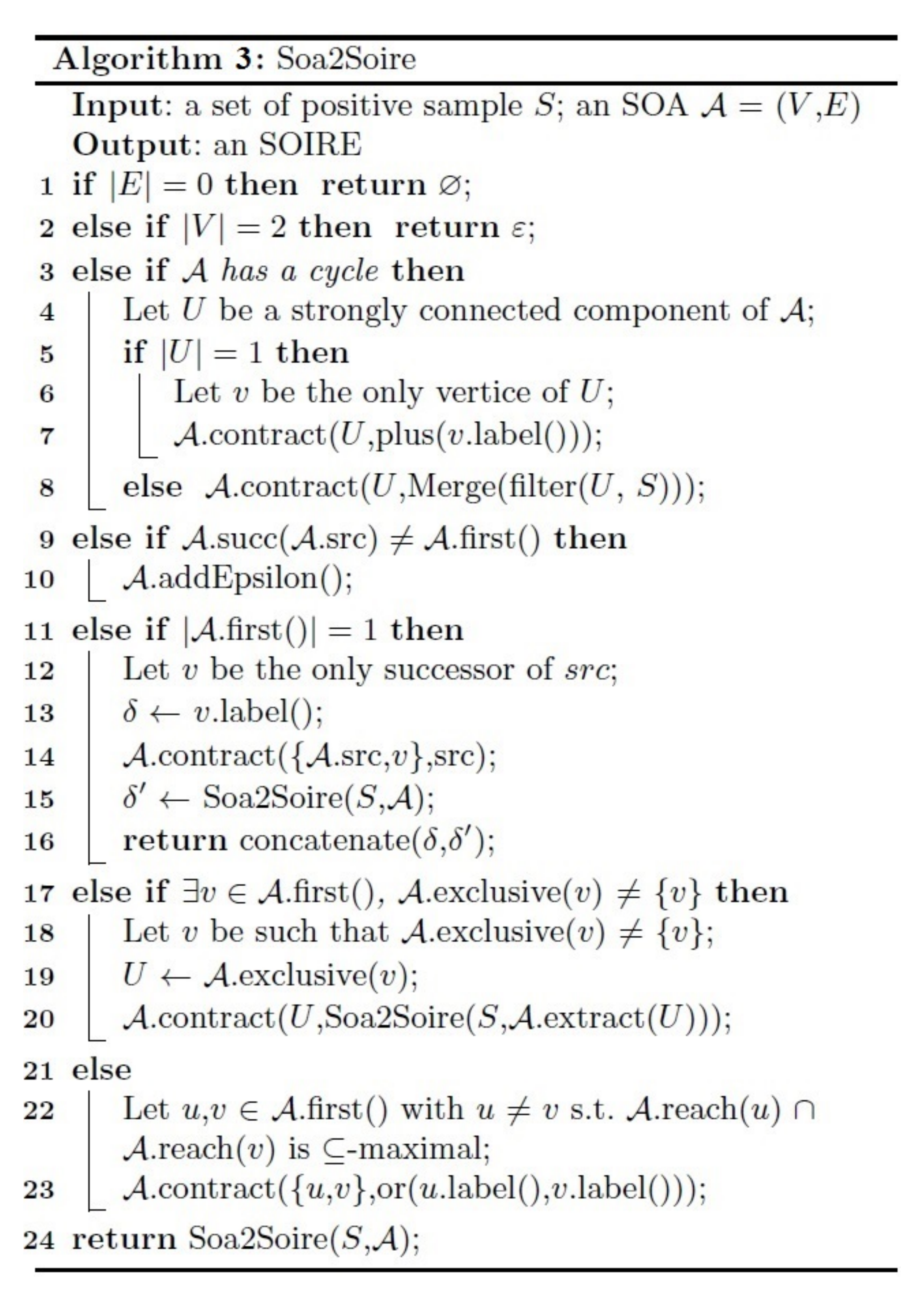}
  \label{Alg1}
  }
  \subfigure[$Merge$]{
  \includegraphics[width=0.45\textwidth]{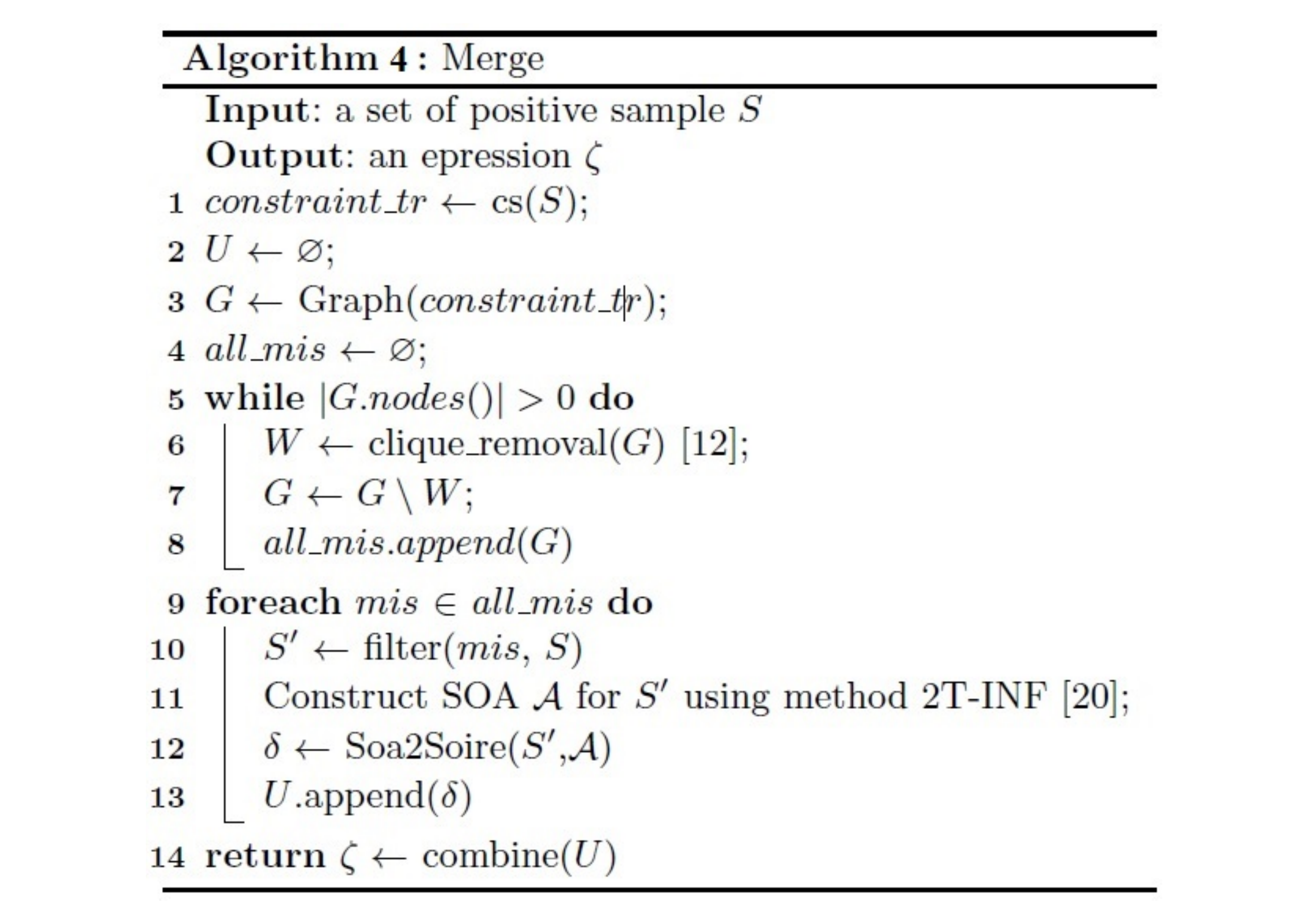}
  \label{Alg2}
  }
  \caption{The algorithm $i$SOIRE.}
  \label{AlgiSOIRE}
\end{figure}

For the given algorithm $Soa2Soire$, there are two conclusions:
\begin{enumerate}
\item  For the learnt expression, Kleene-star $^*$ or iteration $^+$ is just on a single alphabet symbol.
\begin{proof}
  In algorithm $Soa2Soire$, in line 6 $\sim$ line 7, $plus()$ \cite{Frey2015},
  which should have used to add an iteration $^+$ on an expression, but only works on a single alphabet symbol.
  In the whole process of the recursion of the algorithm, $plus()$ does not occur in other cases.
  Since the initially obtained strongly connected components have been merge into a vertex (see subroutine $Contract$ \cite{Frey2015}),
  and the corresponding edges, which can form loops in SOA, have been removed (see subroutine $bend$ \cite{Frey2015}),
  even though for a node $v$: $v.label()$ is an expression consisting of multiple alphabet symbols, $v$ does not possible occur in a strongly connected component.
  Hence, we can assert that a class of expressions cannot be learnt, such as $(ab)^+$, $(a|b)^+$, $(a|b\&c)^+$, $(a\&b)^+$ and $((a|b)\&(c|d)\&(e|f))^*$.
\end{proof}
\item  The expressions of form $a\&(b(c\&d))$ cannot be learnt.
\begin{proof}
  Assume that the expression $a\&(b(c\&d))$ can be correctly learnt by induction.
  In $merge$, the interleaving $\&$ is introduced by the subroutine $combine$,   in line 9 $\sim$ line 13,
  any obtained maximum independent set $mis\in all\_mis$ can form a sample $S'$ $(S'\!=\!filter(mis,S))$.
  And SOA $A$=2T-INF$(S')$. Sample $S'$ and SOA $A$ are as inputs of the algorithm $Soa2Soire$, let $delta$ denote the derived expression.
  According to the computation of $filter$, the alphabet symbols in $\delta$ are in the same maximum independent set $mis$.

  Then in line 9, for each $mis_i\!\in\! all\_mis$, the corresponding expression $\delta_i$ is generated.
  ($S'_i\!=\!filter(mis_i,S)$; SOA $A_i$=2T-INF$(S'_i)$; $\delta_i\!=\!Soa2Soire(S'_i,A_i)$;)
  $\delta_i$ is put into $U$, then $combine(U)=\delta_1\&\delta_2\&\cdots\&_i$.

  For expression $a\&(b(c\&d))$, $mis_1\!=\!\{a\}$, $mis_2\!=\!\{b,c,d\}$, $mis_3\!=\!\{c\}$ and $mis_4\!=\!\{d\}$.
  Then for the initially constructed SOA,
  since $b,c,d$ can form a maximum independent set $mis_2$, then for $mis_3\!=\!\{c\}$ and $mis_4\!=\!\{d\}$ are not maximum independent sets, respectively.
  There is a contradiction, the initial assumption does not hold.
  The expressions of form $a\&(b(c\&d))$ cannot be learnt by $Soa2Soire$.
  \end{proof}
\end{enumerate}

Conclusion 1 and conclusion 2 have revealed that the expression learnt by $Soa2Soire$ still has many restrictions for using interleaving.
Such as the expressions of form  $(a|b\&c)^+$, $(a\&b)^+$, $((a|b)\&(c|d)\&(e|f))^*$ and $a\&(b(c\&d))$. Meanwhile,
the learnt expression  has restrictions for using Kleene-star or iteration, even  does for the expressions of form  $(ab)^+$ and $(a|b)^+$.
Actually, for conclusion 2, if $a$, $b$, $c$ and $d$ are replaced with regular expressions, respectively.
The same conclusions can be obtained. This implies that the expression learnt by $Soa2Soire$  is not an SOIRE.

\section{The correct result learnt by $i$SOIRE}

In Section \ref{analysis}, we have proved that the expression learnt by $Soa2Soire$  is not an SOIRE.
However, we should check whether the expression returned by $i$SOIRE belongs to a subclass of SOIREs or not.
Here, by analyzing the algorithm $Soa2Soire$, we propose a subclass of SOIREs called RSOIREs (see definition \ref{defRSOIRE}) and prove
that the expression learnt by $Soa2Soire$ is a RSOIRE.

\begin{definition}[restricted SOIREs]
A restricted SOIRE (RSOIRE) is  a regular expression with interleaving over $\Sigma$ by the following grammar,
and where every terminal symbol occurs at most once.
\begin{equation}
P:=SP \Big|PS\Big|S\Big|T\Big|P|S
\label{gramm1}
\end{equation}
\vspace{-0.5cm}
\begin{equation}
S:=S\&S\Big|T
\label{gramm2}
\end{equation}
\vspace{-0.5cm}
\begin{equation}
T:=T|T\Big|TT\Big|\varepsilon\Big|a\Big|a^* (a\in \Sigma)
\label{gramm3}
\end{equation}
\label{defRSOIRE}
\end{definition}
\begin{example}
$(a^+|b)(c\&d)$, $ad\&(b|c^*)$, and $a^+|b^+\&c^*$ are RSOIREs.
However, $(ab)\&(c|d)^+$, $((a|b\&c)d?)^*$, and $a\&(b(c\&d))$ are SOIREs, not RSOIREs.
\end{example}
\begin{theorem}
For any given finite sample $S$, let SOA $A$=2T-INF$(S)$, and $r=Soa2Soire(S,A)$. Then $r$ is a RSOIRE, and for any RSOIRE $r'$,
$r'$ can be learnt by $Soa2Soire$.
\label{thm1}
\end{theorem}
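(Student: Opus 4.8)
The statement has two directions: \emph{soundness}, that every expression $r=Soa2Soire(S,A)$ is a RSOIRE in the sense of Definition~\ref{defRSOIRE}, and \emph{completeness}, that every RSOIRE $r'$ arises as $Soa2Soire(S',A')$ for a suitable sample $S'$ with $A'=\text{2T-INF}(S')$. The plan is to prove the two directions separately: the first by induction on the recursion of $Soa2Soire$ (equivalently, on the length of the SOA-reduction sequence), and the second by constructing, for a given RSOIRE, a characteristic sample whose 2T-INF automaton is reconstructed exactly by the algorithm. For soundness I would not induct on the bare predicate ``$r$ is a RSOIRE'' but on a refined invariant that also records which nonterminal ($P$, $S$, or $T$) the returned expression is derivable from: a top-level call $Soa2Soire(S,A)$ returns an expression derivable from $P$; a call issued from inside $Merge$, on a sample $filter(mis,S)$ for a single maximum independent set $mis$, returns an expression derivable from $S$; and the label attached to a contracted strongly connected component of size one is $a^{+}=aa^{*}$, derivable from $T$.

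The base cases ($\varepsilon$, a single symbol $a$, and the self-loop label $a^{+}=aa^{*}$) all lie in $T\subseteq S\subseteq P$. For the inductive step I would treat the two kinds of reductions separately. The reductions inherited from $Soa2Sore$ (concatenation, disjunction, optional) only combine already-built labels by $\cdot$ and $|$ and, by the first conclusion of Section~\ref{analysis}, never put $^{*}$ or $^{+}$ on a compound label; I would check that each such combination instantiates a production of $P$ (concatenation $SP$ or $PS$, right-disjunction $P|S$), so the top level stays in $P$. The genuinely new step is $Merge$: its output is $combine(U')=\delta_{1}\&\cdots\&\delta_{k}$, and to land in $S:=S\&S\mid T$ I must show each $\delta_{i}$ is an $S$. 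This is precisely where the ``calls from inside $Merge$'' clause of the invariant is used, together with the second conclusion of Section~\ref{analysis}, which excludes an operand of $\&$ of the form $b(c\&d)$, i.e. a concatenation carrying an interleaving — exactly the expressions that lie in $P$ but not in $S$.

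The hard part is this $Merge$-internal typing: I must argue that a recursive call on $filter(mis,S)$ for a single maximum independent set $mis$ can never return a $P$-expression that uses a top-level concatenation or a disjunction forbidden by $S$. I would establish this by analysing $\text{2T-INF}(filter(mis,S))$: since $mis$ is a single independent set, its symbols are pairwise order-incomparable in the filtered strings, so the induced SOA forms one strongly connected component whose further $Merge$-decomposition can only split it into singleton (hence $T$-labelled) factors, yielding an interleaving of $T$'s, that is, an $S$. Making the implication ``order-incomparable $\Rightarrow$ one SCC that decomposes into $T$-factors'' precise, and ruling out stray concatenation or disjunction labels inside this recursion, is the delicate point, and it is exactly what the second conclusion enforces at the level of examples.

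For completeness, given a RSOIRE $r'$ I would first define the single-occurrence automaton $A_{r'}$ associated with $r'$ through its first/last/follow relations — well defined because every terminal symbol occurs at most once — and then build a finite \emph{rich} sample $S'\subseteq\mathcal{L}(r')$ containing a witnessing string for every edge of $A_{r'}$, both orders of every interleaved pair, each disjunction branch, and each starred symbol used zero, one and two times, so that $\text{2T-INF}(S')=A_{r'}$. I would then prove $Soa2Soire(S',A_{r'})=r'$ by structural induction along the grammar: for a $T$-expression the $Soa2Sore$-part rebuilds the concatenation/disjunction tree while the $|U|=1$ rule restores each $a^{*}$; for an $S$-expression $T_{1}\&\cdots\&T_{k}$ the richness forces a single size-$>1$ strongly connected component on which $Merge$ returns maximum independent sets that are exactly the symbol sets of the factors $T_{i}$, recursively yielding the $T_{i}$ and recombining them via $combine$; and for a $P$-expression the outer reductions restore the top-level concatenations and the right-disjunctions $P|S$. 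The obstacle here mirrors the one above: I must show the MIS computation recovers \emph{exactly} the interleaving factors, neither finer nor coarser, which again rests on the correspondence between independence in the order graph and the $\&$-factorisation, and that the asymmetry of the $P$-productions (e.g. $P|S$ is allowed but $S|P$ is not) matches precisely what the reductions can and cannot produce.
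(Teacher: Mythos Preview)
Your proposal follows the same two-part structure as the paper—soundness by induction on the algorithm's recursion, leaning on the two conclusions of Section~\ref{analysis}, and completeness by induction on the RSOIRE grammar—and is in fact considerably more careful: the refined $P/S/T$ invariant for the recursive calls and the explicit characteristic-sample construction are natural elaborations of steps that the paper's very informal proof merely asserts without justification. The paper never tracks which nonterminal a subexpression belongs to, never constructs a witnessing sample for completeness, and never isolates the $Merge$-internal typing (that each $\delta_i$ returned from a call on $filter(mis,S)$ lies in $S$ rather than merely in $P$) as the crux; your plan does all three, so where you flag a ``delicate point'' you are identifying a genuine gap in the paper's own argument rather than a defect peculiar to your approach.
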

\begin{proof}
\begin{enumerate}
\item $r$ is a RSOIRE.\label{item1}

(1) For regular expressions $\varepsilon,a,a^*$,
    in algorithm $Soa2Soire$, in line 2, line 11 $\sim$ line 16 and  line 7, they can be derived, the correctness can be ensured by
    the corresponding correctness of algorithm $Soa2Sore$.

(2) For regular expressions $r_1r_2$ and $r_1|r_2$,
    assume that $r_1$ and $r_2$ can be correctly derived by $Soa2Soire$ by induction.
    In  line 11 $\sim$ line 16 and line 22 $\sim$ line 23, $r_1r_2$ and $r_1|r_2$ can be derived, the correctness can also be ensured by
    the corresponding correctness of algorithm $Soa2Sore$.

(3) For regular expression $r_1\&r_2\&\cdots \&r_k$ $(k\geq 2)$, assume that $r_i$ $(1\leq i\leq k)$ can be correctly derived by $Soa2Soire$ by induction.
    According to the conclusion 2 and the corresponding proof in Section \ref{analysis}, $r_i$ cannot be possible to contain the interleaving $\&$.
    The expression $r_1\&r_2\&\cdots \&r_k$ is computed by $combine$ in $Merge$,
    $r_i$ is derived by computing the corresponding maximum independent set $mis_i$.
    And for any two distinct maximum independent sets $mis_i$ and $mis_j$ $(i\!\neq\! j)$, the symbols in $mis_i$ and the symbols in $mis_j$ can be interleaved.
    Thus, the expression $r_1\&r_2\&\cdots \&r_k$ can be correctly derived by $Soa2Soire$.

The expressions discussing in (1), (2) and (3), which are connected by using concatenation or disjunction, can form complexity expressions,
and certainly they can be decomposed into the above discussed  basic expressions.
The grammar (\ref{gramm3}) and grammar (\ref{gramm2}) presented in definition \ref{defRSOIRE} can generated the expressions discussing in (1) and (2), respectively.
And the complexity expressions formed by the expressions discussing in (1), (2) and (3) can be generated by the grammar (\ref{gramm1}).
This implies that any expression $r$ learnt by $Soa2Soire$ is a RSOIRE.  

\item  For any RSOIRE $r'$, $r'$ can be learnt by $Soa2Soire$.\label{item2}

(1) For grammar (\ref{gramm3}), the corresponding generated regular expressions can be derived by $Soa2Soire$,
the correctness can be ensured by the corresponding correctness of algorithm $Soa2Sore$.

(2) For grammar (\ref{gramm2}), according to the proofs in \ref{item1},
the corresponding generated regular expressions with interleaving can also be derived by $Soa2Soire$.

(3) For grammar (\ref{gramm1}), the generated complexity expressions can be decomposed into
the expressions produced by grammar (\ref{gramm2}) or grammar (\ref{gramm3}), then the complexity expressions can also be derived by $Soa2Soire$.

This implies that, for an expression $r'$  generated by the defined grammars, $r'$ can be learnt by $Soa2Soire$.
\end{enumerate}
\end{proof}

We give a correct class of expression that can be learnt by $Soa2Soire$, and present the corresponding proofs of correctness.
 Theorem \ref{thm1} demonstrate that the expression learnt by $Soa2Soire$ belongs to a subclass of SOIREs.

 \section{Conclusion}

 In this paper, we mainly provide analyses about the incorrectness about algorithm $i$SOIRE,
 and then present the correct a class of expressions can be learnt by algorithm $i$SOIRE,
 the corresponding proofs illustrate that the learnt expression belongs to a subclass of SOIREs.
 Since the algorithm $i$SOIRE can be used to learn other classes of expressions, such as
 \textit{$k$-occurrence regular expression with interleaving},
 the corresponding correctness depends on the correctness of the algorithm $i$SOIRE.
 The comments in this paper can be provided as a reference.

%
%
%
\bibliographystyle{splncs04}
\bibliography{Discovering an Algorithm Actually Learning Restricted Single Occurrence Regular Expression with Interleaving.bbl}
\end{document}